\newtheorem{theorem}{Theorem}[section]
\newtheorem{lemma}{Lemma}[section]
\newtheorem{definition}{Definition}[section]
\newtheorem{claim}{Claim}[lemma]
\newcommand{\wts}{\omega}
\begin{document}

\title{Gomory-Hu Trees in Quadratic Time}
\author{Tianyi Zhang \thanks{Tel Aviv University, \href{}{tianyiz21@tauex.tau.ac.il}}}
\date{}

\maketitle

\begin{abstract}
	Gomory-Hu tree [Gomory and Hu, 1961] is a succinct representation of pairwise minimum cuts in an undirected graph. When the input graph has general edge weights, classic algorithms need at least cubic running time to compute a Gomory-Hu tree. Very recently, the authors of [AKL+, arXiv v1, 2021] have improved the running time to $\tilde{O}(n^{2.875})$ which breaks the cubic barrier for the first time. In this paper, we refine their approach and improve the running time to $\tilde{O}(n^2)$. This quadratic upper bound is also obtained independently in an updated version by the same group of authors [AKL+, arXiv v2, 2021].
\end{abstract}

\thispagestyle{empty}
\clearpage
\setcounter{page}{1}
\pagestyle{plain}

\section{Introduction}
It is a famous result by Gomory and Hu \cite{gomory1961multi} that any undirected graph can be compressed into a single tree while all pairwise minimum cuts are preserved exactly. More specifically, given any weighted undirected graph $G = (V, E, \wts)$ on $n$ vertices and $m$ edges, there exists an edge weighted spanning tree $T$ on the same set of vertices $V$, such that: for any pair of vertices $s, t\in V$, the minimum $(s, t)$-cut in $T$ is also a minimum $(s, t)$-cut in $G$, and their cut values are equal. Such trees are called Gomory-Hu trees. In the original paper \cite{gomory1961multi}, Gomory and Hu showed an algorithm that reduces the task of constructing a Gomory-Hu tree to $n-1$ max-flow instances. Gusfield \cite{gusfield1990very} modified the original algorithm Gomory and Hu so that no graph contractions are needed when applying max-flow subroutines. So far, in weighted graphs, faster algorithms for building Gomory-Hu trees were only byproducts of faster max-flow algorithms. In the recent decade, there has been a sequence of improvements on max-flows using the interior point method \cite{lee2014path,madry2016computing,liu2020faster,kathuria2020unit,brand2021minimum}, and the currently best running time is $\tilde{O}(m + n^{1.5})$\footnote{$\tilde{O}$ suppresses poly-logarithimic factors.} by \cite{brand2021minimum}, hence using the classical reductions, computing a Gomory-Hu tree requires $\tilde{O}(mn + n^{2.5})$ time, which is still cubic in dense graphs.

In the special case where $G$ is a simple graph ($\wts$ only assigns unit weights and no parallel edges exist), several improvements have been made over the years. The authors of \cite{hariharan2007mn} designed a $\tilde{O}(mn)$ time algorithm for Gomory-Hu trees using a tree packing approach based on \cite{gabow1995matroid,edmonds2003submodular}. In a recent line of works~\cite{abboud2021friendly,abboud2021apmf,li2021nearly,zhang2021faster,abboud2020subcubic,abboud2020new},  the running time has been improved significantly to $\widehat{O}(m + n^{1.9})$\footnote{$\widehat{O}$ hides sub-polynomial factors.} which is sub-quadratic.

For general edge-weighted graphs, the longstanding cubic barrier of \cite{gomory1961multi} has been surpassed by a very recent online preprint~\cite{abboud2021gomory} (version 1). More specifically, the authors proposed an algorithm that computes a Gomory-Hu tree of $G$ in $\tilde{O}(n^{3-1/8})$ time.

\subsection{Our result}
In this paper, we refine the approach of \cite{abboud2021gomory} and achieve a faster running time.
\begin{theorem}\label{quad}
	Let $G = (V, E, \wts)$ be an undirected graph on $n$ vertices and $m$ edges with positive integer weights, there is a randomized algorithm that computes a Gomory-Hu tree of $G$ in $\tilde{O}(n^2)$ time; when $G$ is unweighted, the running time becomes $\widehat{O}(m + n^{1.5})$.
\end{theorem}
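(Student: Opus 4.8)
The plan is to reduce the construction of a Gomory--Hu tree to a small number of calls to a \emph{single-source minimum cut} (SSMC) subroutine, and then to solve SSMC in $\tilde{O}(n^2)$ time (resp.\ $\widehat{O}(m+n^{1.5})$ time when $G$ is unweighted). Here SSMC means: given a source $s$ and a set of terminals $U$ in a graph derived from $G$ by contractions, output the value $\lambda(s,v)$ of a minimum $(s,v)$-cut for every $v\in U$, together with, for each $v$, the vertex side $S_v\not\ni s$ of one such minimum cut, where the family $\{S_v\}_{v\in U}$ is required to be laminar. The reduction is the randomized recursion underlying the Gomory--Hu algorithm, following the framework of \cite{abboud2021gomory}: pick a uniformly random pivot in each active super-node, run SSMC in the current contracted graph, split the super-node along the resulting laminar family of latest minimum cuts, and recurse on the pieces. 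A random pivot guarantees that the recursion tree has $\mathrm{polylog}(n)$ depth with high probability, and an accounting argument shows that the contracted graphs appearing at any fixed recursion level have $O(n)$ vertices and $O(m)$ edges in total; since the SSMC cost is superadditive in these parameters, the overhead of the reduction is only $\mathrm{polylog}(n)$.

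It then remains to solve SSMC, and for this I would combine the isolating-cuts lemma with a bucketing over the number of vertices on the small side of each cut. For each ``size scale'' $k\in\{1,2,4,\dots,n\}$, repeat $O(\log n)$ times: form a random terminal set $I$ that always contains $s$ and contains every other vertex independently with probability $\Theta(1/k)$, and run the isolating-cuts procedure on $(G,I)$. Fix a vertex $v$ whose latest minimum $(s,v)$-cut has $v$-side $B_v$ with $|B_v|\le k$. With constant probability $I$ contains $v$ but no other vertex of $B_v$; conditioned on this, the minimum cut isolating $v$ from $I\setminus\{v\}$ has value exactly $\lambda(s,v)$ and its $v$-side is a minimum $(s,v)$-cut, since $(V\setminus B_v,B_v)$ is itself a valid isolating cut for $v$ while any isolating cut for $v$ separates $s$ from $v$. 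Taking, over all scales and repetitions, the side-minimal minimum cut found for each $v$ (with a short uncrossing step to recover the \emph{latest} such cut and to laminarize the family across scales) produces a correct SSMC output. In total this is $O(\log^2 n)$ invocations of the isolating-cuts procedure, hence $O(\log^3 n)$ max-flow calls on graphs with at most $n$ vertices and $m$ edges; using the max-flow algorithm of \cite{brand2021minimum} this costs $\tilde{O}(m+n^{1.5})=\tilde{O}(n^2)$, proving the first bound. For the unweighted case, the intermediate graphs are additionally replaced by sparse cut sparsifiers with $\widehat{O}(n)$ edges and the resulting max-flow calls are charged against an expander decomposition, which yields the $\widehat{O}(m+n^{1.5})$ bound.

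The main obstacle I expect is not any individual max-flow call but the bookkeeping that keeps everything quadratic, and it has two delicate points. First, the isolating-cuts lemma returns \emph{some} minimum isolating cut for each terminal, whereas the Gomory--Hu recursion needs the side-minimal (latest) minimum $(s,v)$-cut, and the cuts produced across the different size scales must be merged into a single laminar family; this reconciliation is easy for a fixed $v$ but must be carried out uniformly over all $v\in U$ without spending additional max-flow calls. Second, one must show that the contracted graphs generated along the $\mathrm{polylog}(n)$ recursion levels do not blow up in aggregate size --- each original edge should survive in only $O(1)$ contracted graphs per level --- so that the per-level cost genuinely stays $\tilde{O}(n^2)$ (resp.\ $\widehat{O}(m+n^{1.5})$); together with the high-probability bound on the recursion depth coming from random pivoting, this size analysis is the part of the argument that requires the most care.
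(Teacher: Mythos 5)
Your high-level plan --- reduce Gomory--Hu to repeated single-source min-cut computations, then solve those with isolating cuts --- is the right shape, but the SSMC subroutine you sketch has a fatal probability error that is precisely the obstacle the paper's guide-tree machinery was invented to overcome.

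Concretely, at size scale $k$ you sample $I$ so that each vertex other than $s$ lands in $I$ independently with probability $\Theta(1/k)$, and you claim that for a fixed $v$ with $|B_v|\le k$, ``with constant probability $I$ contains $v$ but no other vertex of $B_v$.'' That is false: $\Pr[v\in I]=\Theta(1/k)$ already, so the good event has probability $\Theta(1/k)\cdot(1-\Theta(1/k))^{|B_v|-1}=\Theta(1/k)$, \emph{not} $\Theta(1)$. With only $O(\log n)$ repetitions at scale $k$ you therefore recover $\lambda(s,v)$ for a fixed $v$ only with probability $\tilde{O}(1/k)$, not w.h.p. To actually cover all $v$ with $|B_v|\approx k$ you would need roughly $\tilde{O}(k)$ repetitions at that scale; summing over scales gives $\tilde{O}(n)$ isolating-cut calls and hence $\tilde{O}(nm+n^{2.5})$ total, which is exactly the classical cubic-type barrier you are trying to break, not $\tilde{O}(n^2)$. (If this naive bucketing worked with $O(\log n)$ repetitions per scale, quadratic Gomory--Hu in weighted graphs would have followed immediately from the isolating-cuts lemma, and no guide-tree machinery would have been needed.) The secondary ``laminarize/uncross the cuts across scales'' bookkeeping you flag as the main difficulty never becomes the bottleneck, because the algorithm never gets far enough to produce the cuts in the first place.

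The paper avoids this by not trying to hit small cut sides randomly at all. Instead it computes a $4$-respecting set of $O(\log n)$ guide trees (Lemma~\ref{guide}, essentially a Steiner-tree-packing argument from~\cite{abboud2021gomory}), so that for every $t$ some tree $T_i$ has at most $4$ tree edges crossing some minimum $(s,t)$-cut. The single-source problem is then solved per guide tree by the recursive subroutines $\mathsf{TreeMincuts}$/$\mathsf{LeafMincuts}$: a centroid decomposition of $T$ combined with random subtree pruning (to reduce $k$) and isolating cuts (to contract away balanced pieces). The determinism of the tree structure, plus the fact that $k$ is a constant, is what gives per-vertex success with high probability after only $O(\log n)$ pruning trials and keeps the total max-flow work near-linear. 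You would need to import this (or an equivalent) mechanism; the bucketing-by-small-side idea on its own does not suffice, and also does not cleanly handle the verification form of the problem (SSTCV from~\cite{li2021nearly}) that the paper's reduction actually produces.
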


\noindent\textbf{Independent work.} The same result has also been achieved independently in an updated version of \cite{abboud2021gomory} (version 2).

\subsection{Technical overview}
\textbf{Notations.} Denote by $G = (V, E, \wts)$ be an undirected graph whose edge weights are positive integers. For any pair of vertices $a, b\in V$, let $\lambda(a, b)$ be the value of the minimum $(a, b)$-cut. For any vertex subset $U\subseteq V$, the $U$-Steiner minimum cut is denoted by $\lambda(U) = \min_{a, b\in U}\lambda(a, b)$. For any subset $A\subseteq V$, define $\delta(A)$ to be the value of the cut $(A, V\setminus A)$ in $G$.

\begin{definition}[The single-source terminal min-cuts verification (SSTCV) problem]
	The input is a graph $G = (V, E)$, a terminal set $U\subseteq V$ and a source terminal $s\in U$ with the guarantee that for all $t\in U\setminus \{s\}$, $\lambda(U)\leq \lambda(s, t)\leq 1.1\lambda(U)$, and estimates $\{\mu_t \}_{t\in U\setminus \{s\}}$ such that $\mu_t\geq \lambda(s, t)$. The goal is to determine for each $t\in U\setminus \{s\}$ whether $\mu_t = \lambda(s, t)$.
\end{definition}

\begin{lemma}[\cite{li2021nearly}]
	There is a randomized algorithm that computes a Gomory-Hu tree of an input graph by making calls to max-flow and SSTCV on graphs with a total of $\tilde{O}$ vertices and $\tilde{O}(m)$ edges, and runs for $\tilde{O}(m)$ time outside of these calls.
\end{lemma}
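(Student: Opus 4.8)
The plan is to establish this reduction inside the recursive Gomory--Hu framework, with two modifications over the classical algorithm: each individual max-flow is replaced by one \emph{single-source} minimum-cut computation from a \emph{uniformly random} pivot (so that the recursion has only logarithmic depth), and each such single-source computation is in turn realized by a small number of max-flow and SSTCV calls. Throughout we work with Gomory--Hu Steiner trees --- for a terminal set $U\subseteq V$, a tree on a superset of $U$ realizing $\lambda(a,b)$ for all $a,b\in U$ --- the top-level instance being $U=V$.

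One recursion step takes an auxiliary graph $H$, which is $G$ with some subset of $V$ contracted to a single apex vertex, together with a terminal set $U\subseteq V(H)$ (the apex may itself be a terminal). First I would draw $s\in U$ uniformly at random and compute, for every $t\in U\setminus\{s\}$, a minimum $(s,t)$-cut; taking the cut that is minimal on the $t$-side and applying the standard Gomory--Hu uncrossing, these cuts assemble into a laminar family on $V(H)$ that refines it into pieces $P_0\ni s,P_1,\dots,P_k$. I would then recurse on each piece, forming the auxiliary graph for a piece $Q$ by contracting $V\setminus Q$ \emph{in the original graph $G$} to one apex vertex --- so that contractions never compound across levels --- with terminal set $(U\cap Q)\cup\{\text{apex}\}$, and finally glue the trees returned by the recursive calls along the laminar structure exactly as in the classical construction, which is where correctness comes from. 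Note that every terminal $t\ne s$ lies in its own minimal min-$(s,t)$-cut side and hence in some member of the laminar family, so the source piece satisfies $P_0\cap U=\{s\}$; thus $s$ is resolved after a single step.

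The two quantitative points are the depth bound and the size bound. For the size bound: at a fixed recursion level the pieces partition $V$, and each of the at most $n$ pieces adds one apex, so the subproblems at that level have at most $2n$ vertices in total; similarly each edge of $G$ either lies inside one piece or crosses between two pieces and then contributes to at most two apex-edge bundles, so the subproblems at a level have $O(m)$ edges in total. For the depth bound I would fix a terminal $t$ and follow the chain of subproblems that contain it; the crucial input, and the step I expect to be the main obstacle to reprove from scratch, is the structural claim of \cite{li2021nearly}: for a uniformly random pivot $s$ drawn from the current terminal set $U$ and any fixed $t\in U$, the piece containing $t$ has at most $\frac34|U|$ terminals with probability at least $\frac12$. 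One proves this by regarding $s$ as a uniformly random node of the (unknown) Gomory--Hu Steiner tree on $U$ and analyzing, via submodularity of the cut function, how the laminar family of single-source minimum cuts subdivides that tree; the delicate part is that one must lower-bound the number of terminals that leave $t$'s side --- equivalently, control the \emph{annulus} of the laminar set containing $t$ --- rather than just invoking a single min-$(s,t)$-cut side, which is too lossy. Granting this claim, a Chernoff bound over the $\Theta(\log n)$ constant-probability ``shrink'' events for $t$ followed by a union bound over all terminals gives recursion depth $O(\log n)$ with high probability, hence total size $\tilde{O}(n)$ vertices and $\tilde{O}(m)$ edges over all subproblems, with $\tilde{O}(m)$ bookkeeping.

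Finally I would implement the single-source minimum-cut computation of one recursion step using only the two permitted oracles. First, reduce to the SSTCV regime: repeatedly compute a $U$-Steiner minimum cut (one max-flow on a source-contracted graph) and split $U$ across it, recursing on the two sides; a standard scale/geometric argument shows this terminates after polylogarithmically many further levels, after which every surviving sub-instance satisfies $\lambda(U)\le\lambda(s,t)\le 1.1\lambda(U)$, and since each split again partitions $V$ the size accounting above is unchanged up to a polylogarithmic factor. Within a balanced sub-instance, obtain the estimates $\mu_t\ge\lambda(s,t)$ required by SSTCV by computing the values $\lambda(s,t)$ approximately with $\tilde{O}(1)$ max-flow calls on sparsifiers, hand $(H,U,s,\{\mu_t\})$ to the SSTCV oracle to learn which estimates are tight, and recover an actual minimum $(s,t)$-cut for each $t$ from the SSTCV output together with $\tilde{O}(1)$ further max-flow calls. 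Composing the three reductions --- random-pivot Gomory--Hu recursion, Steiner-min-cut balancing, and SSTCV-based verification --- yields an algorithm whose max-flow and SSTCV calls are on graphs of total size $\tilde{O}(n)$ vertices and $\tilde{O}(m)$ edges and which spends only $\tilde{O}(m)$ time outside those calls, which is the claimed statement.
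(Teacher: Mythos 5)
This lemma is cited from \cite{li2021nearly} and the paper does not reproduce its proof, so there is no in-paper argument to compare your attempt against. Your skeleton --- random-pivot Gomory--Hu recursion with $O(\log n)$ depth via a constant-probability shrink argument, per-level size accounting over the laminar partition, and Steiner-min-cut balancing to enter the $\lambda(s,t)\le 1.1\lambda(U)$ regime --- does follow the shape of the Li--Panigrahi reduction.

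The genuine gap is at the interface with SSTCV. The oracle is boolean: given $\mu_t\ge\lambda(s,t)$ it only reports whether $\mu_t=\lambda(s,t)$, and tells you nothing further when the answer is no. Your plan to obtain all $|U|$ estimates by ``computing the values $\lambda(s,t)$ approximately with $\tilde{O}(1)$ max-flow calls on sparsifiers'' does not produce $|U|$ numbers from $\tilde{O}(1)$ flows, and even a perfect $(1+\epsilon)$-approximation would essentially never be \emph{exactly} tight, so SSTCV would answer ``no'' across the board and the recursion would not advance; likewise ``recover an actual minimum $(s,t)$-cut for each $t$ \dots with $\tilde{O}(1)$ further max-flow calls'' is $|U|$ cuts from $\tilde{O}(1)$ flows. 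What \cite{li2021nearly} actually does is take each $\mu_t$ to be an exact minimum \emph{isolating-cut} value for $t$ against a uniformly subsampled terminal set containing $s$; these are genuine cut values (hence valid upper bounds), and a sampling argument over the hidden Gomory--Hu tree shows a constant fraction of them are exactly $\lambda(s,t)$ in expectation. Terminals that SSTCV certifies tight are split off along their isolating-cut sides, the rest stay with the source and are re-processed, and a charging argument over the GH tree gives $O(\log n)$ rounds. Without a mechanism of this kind to manufacture upper bounds that are simultaneously exact cut values and tight for a constant fraction of terminals, the verification oracle is not actionable and the reduction does not close. Your depth lemma (piece containing $t$ shrinks to $\tfrac34|U|$ with probability $\tfrac12$) is correctly identified as the other crux, but the proof sketch you give for it is too thin to assess.
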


A tree $T$ on $V$ is called a $U$-Steiner tree if it spans $U$; when
$U$is clear from the context, we write Steiner instead of $U$-Steiner. We stress that $T$ need not be a subgraph of $G$. Let $(A, V\setminus A)$ be a cut and $C\subseteq E$ be the edge set crossing this cut. We say that a tree $T$ $k$-respects a cut $(A, V\setminus A)$ (and vice versa) if $|E(T)\cap C|\leq k$.

\begin{definition}[Guide trees~\cite{abboud2021gomory}] For a graph $G$ and set of terminals $U\subseteq V$with a source $s\in U$, a collection of Steiner trees $T_1, \cdots, T_h$ is called a $k$-respecting set of guide trees, or in short guide trees, if for every $t\subseteq U\setminus \{s\}$, at least one tree $T_i$ is $k$-respecting at least one minimum $(s, t)$-cut in
$G$.
\end{definition}

As a preparatory step, we first need an algorithm from~\cite{abboud2021gomory} that computes guide trees.
\begin{lemma}[\cite{abboud2021gomory}]\label{guide}
	There is a randomized algorithm that, given a graph $G = (V, E, \wts)$, a terminal set $U\subseteq V$ and a source terminal $s\in U$, with the guarantee that for $t\in U\setminus \{s\}$, $\lambda(U)\leq \lambda(s, t)\leq 1.1\lambda(U)$, computes a $4$-respecting set of $O(\log n)$ guide trees. The algorithm takes $\tilde{O}(n^2)$ time in weighted graphs, and $\widehat{O}(m)$ in unweighted graphs.
\end{lemma}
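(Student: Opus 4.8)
The plan is to build the guide trees as a random sub-collection of the trees produced by a greedy spanning-tree packing on a sparsified, low-connectivity surrogate of $G$. The engine is the Karger--Thorup phenomenon: in such a packing, a constant fraction of the trees $k$-respects every cut whose value is at most roughly $k/2$ times the connectivity. First I would pass from $G$ to a graph $H$ on $V$ with $\tilde O(n)$ edges and global minimum cut $\tilde\Theta(1)$: a preprocessing step contracts terminal-free low-connectivity parts so that the global minimum cut becomes $\Theta(\lambda(U))$ (any cut of value below $\lambda(U)$ has all of $U$ on one side, since a cheaper cut would separate two terminals), and then a Benczúr--Karger cut sparsifier followed by a Karger-style skeleton (weights handled by non-uniform sampling) produces, with high probability, a rescaled $H$ in which every cut of value at most $2\lambda(U)$ in $G$ is preserved to within a $1\pm\epsilon$ factor, with $\lambda_H(U)$ and the global minimum cut $c$ of $H$ both $\Theta(\log n/\epsilon^2)=\tilde O(1)$.

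Next I would run $N=\Theta(c\log n/\epsilon^2)=\tilde O(1)$ iterations of greedy tree packing on $H$ --- repeatedly take a minimum-weight spanning tree with respect to the current edge loads and increment those loads --- producing spanning trees $T_1,\dots,T_N$ in $\tilde O(n)$ total time. By Nash--Williams--Tutte the fractional spanning-tree-packing value of $H$ is at least $c/2$, so the standard analysis of greedy packing gives a maximum edge load $L\le(1+\epsilon)\cdot 2N/c$. Fix $t\in U\setminus\{s\}$ and a minimum $(s,t)$-cut $C$ of $G$. By hypothesis $C$ has value at most $1.1\lambda(U)$, hence value at most $1.1(1+O(\epsilon))\lambda_H(U)\le 1.2\,c$ in $H$ for $\epsilon$ small; since every edge crossing $C$ carries load at most $L$,
\[
\sum_{i=1}^{N}\bigl|E(T_i)\cap C\bigr|\ \le\ 1.2\,c\cdot L\ \le\ 2.4(1+\epsilon)\,N\ <\ \tfrac{5}{2}\,N .
\]
If more than half of $T_1,\dots,T_N$ crossed $C$ at least $5$ times this sum would exceed $\tfrac{5}{2}N$, a contradiction; hence at least $N/2$ of the trees are $4$-respecting of $C$.

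Finally I would output $h=\Theta(\log n)$ trees drawn independently and uniformly at random from $T_1,\dots,T_N$. For each fixed $t$, every draw misses the ``good half'' of trees for its cut $C$ with probability at most $\tfrac{1}{2}$, so all $h$ draws miss it with probability at most $2^{-h}\le n^{-10}$; a union bound over the at most $n$ terminals shows that, with high probability, for every $t$ some output tree $4$-respects some minimum $(s,t)$-cut of $G$ --- that is, the output is a $4$-respecting set of $O(\log n)$ guide trees. Note we never need to compute the cuts $C$, only that they exist, and that restricting the output trees to $U$ is harmless since a spanning tree already spans $U$.

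I expect the main obstacle to be the connectivity reduction in the first step: guaranteeing that the \emph{global} minimum cut of $H$ --- not merely its $U$-Steiner minimum cut --- is $\Omega(\lambda(U))$ after rescaling, since a handful of low-degree non-terminals would otherwise collapse the spanning-tree-packing value and destroy the averaging bound. Making this precise and efficient (contracting the right terminal-free low-connectivity sets, or, alternatively, replacing the spanning-tree packing by an approximate $U$-Steiner-tree packing), together with carrying out the sparsification and the Steiner-connectivity estimation within the stated budget --- $\tilde O(n^2)$ in weighted graphs, and $\widehat O(m)$ in unweighted graphs, where expander-decomposition-based Steiner-cut tools replace the generic sparsifier --- is where the work lies; the greedy packing, the averaging step, and the subsampling are routine once that is in place.
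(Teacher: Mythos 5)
This lemma is quoted from \cite{abboud2021gomory} and the present paper gives no proof of it, so there is no in-paper argument for me to compare your sketch against; I can only judge it on internal consistency and against what is standard. Your high-level plan is the expected Karger--Thorup route: sparsify, pack spanning trees greedily, use a Markov/averaging bound to show that at least half the packed trees $4$-respect any cut of value at most $O(1)$ times the connectivity, then subsample $O(\log n)$ trees and union-bound over terminals. The arithmetic in the averaging step ($\sum_i |E(T_i)\cap C| \le 1.2\,c\cdot L \le 2.4(1+\epsilon)N < \tfrac{5}{2}N$, hence at least half the trees cross at most four times) and in the subsampling step are both correct as written, and the observation that a spanning tree of $V$ already serves as a $U$-Steiner tree is fine.

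However, the step you yourself flag as the main obstacle is a genuine gap, not a routine detail. You need a surrogate graph $H$ whose \emph{global} minimum cut is $\Theta(\lambda(U))$, because Nash--Williams--Tutte ties the spanning-tree packing value to the global minimum cut, not to the Steiner connectivity. ``Contract terminal-free low-connectivity parts'' is an observation, not an algorithm: the cuts of value below $\lambda(U)$ are not given, their terminal-free sides need not be nested, and certifying or enumerating them naively is as expensive as the problem you started with. A single non-terminal attached by a light edge already collapses the global min cut, blows up the load bound $L\le(1+\epsilon)\,2N/c$, and destroys the ``at least half are $4$-respecting'' conclusion. Nor does the alternative you float --- packing $U$-Steiner trees directly --- obviously rescue the argument: for the threshold-$5$ Markov step to leave half the trees $4$-respecting, the packing value must be within roughly a factor $2$ of $\lambda(U)$, which for Steiner-tree packing is Kriesell's conjecture; the unconditional bounds known are far weaker. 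So the shape of your argument is right and the two ``routine'' phases are correct, but the load-bearing reduction --- converting the Steiner-connectivity guarantee into a global-connectivity guarantee within $\tilde O(n^2)$ time in weighted graphs and $\widehat O(m)$ in unweighted graphs --- is precisely what is left unproved.
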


According to~\cite{abboud2021gomory}, it suffices to solve the problem of \emph{single-source min-cuts given a guide-tree}. Our main technical contribution is the following statement.
\begin{lemma}\label{tree}
	Let $G = (V, E, \wts)$ be an undirected weighted graph on $n$ vertices and $m$ edges, with terminals $U\subseteq V$ containing a given source vertex $s\in U$, and let $T$ be a $U$-Steiner tree. Let $k\geq 2$ be a constant independent of $n, m$. Then, there is a randomized algorithm that computes for each terminal $t\in U$ a value $\mu(s, t)$ such that: $$\lambda(s, t)\leq \mu(s, t)\leq \lambda_{T, k}(s, t)$$ where $\lambda_{T, k}(s, t)$ is the value of the minimum $(s, t)$-cut that $k$-respects tree $T$, while $\lambda(s, t)$ is the value of minimum $(s, t)$-cut in $G$.
	
	The algorithm is a near-linear time reduction to max-flow instances with a total number of $\tilde{O}(m)$ edges and $\tilde{O}(n)$ edges. Using the currently fastest max-flow algorithm~\cite{van2021minimum}, the running time becomes $\tilde{O}(m + n^{1.5})$.
\end{lemma}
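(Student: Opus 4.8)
The plan is to reduce the task to \emph{exhibiting} cuts rather than computing exact cut values: for each terminal $t$ I would produce a small family of honest $(s,t)$-cuts of $G$ (or of contractions of $G$, which are still $(s,t)$-cuts of $G$) and set $\mu(s,t)$ to be the minimum value over that family. The lower bound $\lambda(s,t)\le\mu(s,t)$ is then automatic, since every candidate is a genuine cut, so the whole burden is to guarantee that for each $t$ at least one candidate separates $s$ from $t$ and has value at most $\lambda_{T,k}(s,t)$. Root $T$ at $s$, and for a tree edge $e$ let $D_e$ denote the set of vertices in the subtree below $e$. A cut that $k$-respects $T$ is described by a set of at most $k$ tree edges together with a choice of which of the resulting $\le k+1$ components go into the side $A\not\ni s$; since $k$ is constant there are only $2^{O(k)}$ such combinatorial ``shapes'', and inside a shape the remaining freedom is where in $T$ to place the $\le k$ edges.

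For the main construction I would use a heavy-path decomposition of $T$ together with a residual budget parameter $k'\le k$, starting from $k'=k$. At a recursion node we work inside a subtree $T'$ of $T$ with the rest of $G$ contracted into $O(1)$ representative vertices, one playing the role of $s$, and with $+\infty$-weight edges inserted wherever the decomposition so far forbids a crossing edge, so that feasible cuts of the contracted instance are exactly the $k$-respecting $(s,t)$-cuts of $G$ consistent with the partially fixed pattern. Splitting $T'$ along the next heavy path, the at most $k'$ remaining crossing edges distribute between the heavy path itself and the light subtrees hanging off it in $k'^{O(k)}=O(1)$ ways; for each distribution I would contract all but one piece (spending one unit of budget on each contracted boundary edge that is actually used), recurse into that piece with the reduced budget, and resolve the crossing edges lying on the heavy path directly. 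The ``on a heavy path'' sub-problem --- choose an $A$ made of at most $k'$ intervals containing $t$ but not $s$ --- is handled by $O(1)$ max-flow computations on the contracted instance, batched over all eligible $t$ by $O(\log n)$ calls using a random labelling of the terminals in the style of the isolating-cuts lemma (this is where randomness enters). The base case $k'=1$ is purely combinatorial: compute all values $\delta(D_e)$ in near-linear time and take tree-path minima toward $s$, which gives $\lambda_{T,1}(s,t)$ for every $t$ at once.

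For the accounting, the recursion has depth $O(\log n)$ from the heavy-path decomposition times $O(k)=O(1)$ from the budget, with $O(1)$ branches at each node; at any fixed level the still-uncontracted parts of $G$ across the different instances overlap only $O(1)$-fold (each original edge survives uncontracted in $O(1)$ of them), so over all max-flow instances the total number of edges is $\tilde O(m)$ and the total number of vertices is $\tilde O(n)$, and all work outside the flow calls is near-linear. Plugging in the max-flow algorithm of \cite{van2021minimum} yields the stated $\tilde O(m+n^{1.5})$ running time. Correctness of the sandwich closes as indicated: every value $\mu(s,t)$ returned is the value of a genuine cut separating $s$ from $t$, hence $\mu(s,t)\ge\lambda(s,t)$; and for the recursion branch whose pattern matches the optimal $k$-respecting $(s,t)$-cut $(A,V\setminus A)$, the corresponding max-flow (or, for $k'=1$, the path-minimum) returns a value at most $\delta(A)=\lambda_{T,k}(s,t)$, hence $\mu(s,t)\le\lambda_{T,k}(s,t)$.

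The hardest part will be the structural interface: describing precisely how a $k$-respecting $(s,t)$-cut can interact with a heavy path, especially the ``nested'' shapes such as $A=D_{e_1}\setminus D_{e_2}$ with $e_2$ a descendant of $e_1$, where $A$ and its complement both straddle the separator, and then designing the contractions and the $+\infty$-edges so that (i) no spurious, too-cheap, non-separating cut is ever created in a contracted instance, and (ii) each genuine pattern is realized in exactly one branch with exactly the right residual budget. Making this bookkeeping watertight so that the induction on $k'$ goes through while keeping the total instance size $\tilde O(m)$ --- together with turning the ``path with $\le k'$ intervals containing $t$'' problem into $O(1)$ max-flow calls --- is the crux of the argument.
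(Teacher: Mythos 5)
Your proposal takes a genuinely different route from the paper, but it has an unresolved gap at exactly the point you flag as ``the crux,'' and that gap is not a mere technicality --- it is where all the difficulty lives.

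The paper does not enumerate combinatorial shapes of a $k$-respecting cut, and it does not work with a heavy-path decomposition. Instead it recurses on a \emph{centroid} decomposition of $T$ and reduces the parameter $k$ by \emph{randomized subtree pruning}: around the centroid $c$ it forms $O(\log n)$ random subtrees $T^{(1)}$ by independently deleting each hanging subtree with probability $1/2$, and calls $\mathsf{TreeMincuts}(G,T^{(1)},k-1)$. The point is that any $k$-respecting cut that touches two sibling subtrees has, with constant probability per trial, one of them pruned, so the surviving $T^{(1)}$-restriction is $(k-1)$-respecting; no explicit case analysis of ``which $\le k$ tree edges, in which components'' is ever needed. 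The remaining cases (cuts confined to one subtree) are pushed into contracted instances built from Li--Panigrahi isolating cuts. There is also a second subroutine $\mathsf{LeafMincuts}$, handling cuts that must cross the edge at $s$, whose correctness hinges on a subtle trick: it computes a single max-flow to a candidate vertex $z$ chosen to \emph{maximize} the current estimate $\mu(s,z)$ across all sibling subtrees, and then recurses into only the one subtree containing $z$. This is what keeps the sibling recursive instances from overlapping and is the technical heart of getting $\tilde O(m)$ total size, something your proposal does not address.

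Your plan, by contrast, fixes a shape (which of the $\le k+1$ tree components lie in $A$), splits $T$ along a heavy path, and then tries to ``resolve the crossing edges lying on the heavy path directly'' with $O(1)$ batched max-flow calls à la isolating cuts. For $k'=1$ your base case (tree-path minima of $\delta(D_e)$) is fine. But already for two crossing edges on the same heavy path, the subproblem is: over all $\Theta(\ell^2)$ placements of an interval boundary pair on a path of length $\ell$, find for each $t$ the cheapest $(s,t)$-cut consistent with the shape. The isolating-cuts machinery gives you the minimum cut isolating one terminal from a set --- a single-separator, laminar-family statement --- and there is no obvious way to repackage ``best union of up to $k'$ intervals containing $t$'' into $O(1)$ (or $O(\log n)$) flow calls; you do not supply an argument, and I do not see one that does not reintroduce a polynomial factor. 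The paper sidesteps this entirely: the pruning step means it never has to optimize over multiple free separator positions simultaneously. A secondary weakness is your ``overlap only $O(1)$-fold at each level'' accounting claim, which you assert but do not prove; the paper instead sets up explicit recurrences $f_1,f_2,g_1,g_2$ on vertex/edge counts, and the $\mathsf{LeafMincuts}$ maximizer trick is precisely what makes those recurrences close. In short: the blind proposal is a reasonable first sketch, but the path-with-multiple-intervals subproblem and the instance-size accounting are real gaps, and the paper's subtree-pruning + maximizer mechanism is the idea you would need to discover to fill them.
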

For comparison, the algorithm in~\cite{abboud2021gomory} (version 1) runs in time $\tilde{O}(n^{3-1/2^{k-1}})$. Combining the above lemma with Lemma~\ref{guide}, it immediately proves Theorem~\ref{quad}.

\section{Faster single-source minimum cuts}
In this section we prove Lemma~\ref{tree}. Our algorithm mostly follows the recursive framework of~\cite{abboud2021gomory}, but with some generalizations. The algorithm will be named $\mathsf{TreeMincuts}(G, T, k)$ which accepts parameters $(G, T, k)$ where $T$ is a Steiner tree. Throughout the recursion, the algorithm maintains for each $t\in V(T)\setminus \{s\}$ a candidate cut value $\mu(s, t)$ which is initialized to be $\infty$. 

Other than $\mathsf{LeafMincuts}$, we also need a helper subroutine $\mathsf{LeafMincuts}(G, T, k)$ that deals with the special case where $s$ is a leaf of $T$ connected via edge $(s, p)$, and for each $t\in V(T)\setminus \{s\}$, we only need to ensure that $\mu(s, t)\leq \eta_{T, k}(s, t)$, where $\eta_{T, k}(s, t)$ is the value of the minimum $k$-respecting $(s, t)$-cut which crosses edge $(s, p)$.

\subsection{Description of $\mathsf{TreeMincuts}$}
If $k = 0$, then the algorithm does nothing. For the base case where $|T| = O(1)$, then compute each minimum $(s, t)$-cut in $G$ by applying standard max-flow algorithms, and then terminate from here. For the rest let us assume $|T|$ is at least a large constant. The algorithm consists of several phases below.\\

\noindent \textbf{Preparation.} Let $c\in V(T)$ be a centroid; that is $T\setminus \{c\}$ is a forest of spanning trees of size at most $2|V(T)|/3$. Let $T_1, \cdots, T_l$ be subtrees of $T\setminus \{c\}$ not containing $s$. When $s\neq c$, consider the tree path of $T$ from $s$ to $c$ denoted by $\langle (s =) r_0, r_1, \cdots, r_h (= c) \rangle$. Consider the forest $T\setminus \{r_0, r_1, \cdots, r_h\}$. Let $F\subseteq T$ be the forest connected to vertex $s$ in $T\setminus \{r_1\}$, and let $T_0\subseteq T$ be the subtree $T\setminus (\{s\}\cup V(F)\cup\bigcup_{i=1}^l V(T_i))$.\\

\noindent \textbf{Subtree pruning.} For each subtree $T_i, 1\leq i\leq l$, and the forest $F$, independently prune it from $T$ with probability $1/2$. Let $T^{(1)}$ be the randomly pruned tree, and then recursively call $\mathsf{TreeMincuts}(G, T^{(1)}, k-1)$. Repeat this procedure for $O(\log n)$ times.

After that, define $T^{(2)}\subseteq T$ to be the subtree on $\{s\}\cup V(F)$. Recursively call the routine $\mathsf{TreeMincuts}(G, T^{(2)}, k-1)$.\\

\noindent \textbf{Isolating cuts.} Compute minimum isolating cuts on $G$ with terminal sets $\{s, c\}$, all of $V(T_i)$ for any $0\leq i\leq l$, and $V(F)$. For each $0\leq i\leq l$, let $W_i\supseteq V(T_i)$ be the side of the isolating cut containing $V(T_i)$. For each $0\leq i\leq l$, contract all vertices $V\setminus W_i$ into a single node, so $T$ has become a subtree on $\{c\}\cup V(T_i)$. Relabel $c$ as $s$, and call the new tree $T^{(3)}_i$, and the contracted graph $G_i$. Recursively call $\mathsf{TreeMincuts}(G_i, T^{(3)}_i, k)$.

Similarly, let $S\supseteq V(F)$ be the side of the isolating cut containing $V(F)$. Contract all vertices from $V\setminus S$ into a single node, so $T$ has become a subtree on $\{s\}\cup V(F)$. Call this new tree $T^{(3)}_0$, and the contracted graph $H$. Recursively call $\mathsf{TreeMincuts}(H, T^{(3)}_0, k)$.

Finally, delete vertices in $V(F)\cup V(T_0)$ from $T$, and attach $s$ directly to $c$, which creates a new tree $T^{(4)}$. Recursively call $\mathsf{LeafMincuts}(G, T^{(4)}, k)$ and $\mathsf{TreeMincuts}(G, T^{(4)}, k-1)$.

\subsection{Description of $\mathsf{LeafMincuts}$}
The algorithm is mostly similar to $\mathsf{TreeMincuts}$, but with some subtle differences. When $k = 0$, then the algorithm does nothing. For the base case where $|T| = O(1)$, then compute each minimum $(s, t)$-cut in $G$ by applying standard max-flow algorithms, and then terminate from here. For the rest let us assume $|T|$ is at least a large constant. The algorithm consists of several phases below.\\

\noindent \textbf{Preparation.} Let $c\in V(T)$ be a centroid; that is $T\setminus \{c\}$ is a forest of spanning trees of size at most $2|V(T)|/3$. Let $T_0, T_1, \cdots, T_l$ be subtrees of $T\setminus \{c\}$ such that $s\in V(T_0)$. Let $(s, p)\in E(T)$ be the unique edge incident on $s$.\\

\noindent \textbf{Subtree pruning.} For each subtree $T_i, 1\leq i\leq l$, independently prune it from $T$ with probability $1/2$. Let $T^{(1)}$ be the randomly pruned tree, and then recursively call $\mathsf{LeafMincuts}(G, T^{(1)}, k-1)$. Repeat this procedure for $O(\log n)$ times.

In addition, remove $V(T_0)$ from $T$ entirely, and reconnect $s$ to $T$ at $c$. Call the new tree $T^{(2)}$. Recursively call $\mathsf{TreeMincuts}(G, T^{(2)}, k-1)$.\\

\noindent \textbf{Isolating cuts.} Compute minimum isolating cuts on $G$ with terminal sets $\{s, c\}$, all of $V(T_i)$ for any $1\leq i\leq l$, and $V(T_0)\setminus \{s\}$. For each $1\leq i\leq l$, let $W_i\supseteq V(T_i)$ be the side of the isolating cut containing $V(T_i)$, and let $W_0\supseteq V(T_0)\setminus \{s\}$ be the side of the isolating cut containing $V(T_0)\setminus \{s\}$. 

Contract all vertices $V\setminus W_0$ into a single node. By doing this, all vertices in $\bigcup_{i=1}^lV(T_i)$ have been merged with $c$, and $T$ has become a tree on $\{c\}\cup V(T_0)$. Call the new tree $T^{(3)}$, the contracted graph $G_0$. Recursively call $\mathsf{LeafMincuts}(G_0, T^{(3)}, k)$.

The key difference from $\mathsf{TreeMincuts}$ is how it deals with vertices in $\bigcup_{i=1}^l V(T_i)$. As we will see later, it is also very important that this step goes after the subtree pruning phase. 

First, apply standard max-flow algorithms to compute the minimum $(s, \{c\}\cup\bigcup_{i=1}^lV(T_i))$-cut, and update the cut value to all $\mu(s, t)$, $t\in \{c\}\cup\bigcup_{i=1}^lV(T_i)$. Take the vertex $z\in \bigcup_{i=1}^lV(T_i)$ such that the value of $\mu(s, z)$ is \textbf{maximized} (ties are broken arbitrarily), under the current view of cut values $\mu(\cdot, \cdot)$. Without loss of generality, assume $z\in V(T_1)$. Then, contract all $W_i, \forall i\neq 1$ into a single node and merge it with $c$. Thus, $T$ has become a tree on vertices $\{s, c\}\cup V(T_1)$, and call this new tree $T^{(4)}$, the contracted $G_1$. Then, recursively call $\mathsf{LeafMincuts}(G_1, T^{(4)}, k)$. We want to emphasize that due to efficiency concerns, we cannot recursively call $\mathsf{LeafMincuts}$ on other trees $T_j, 1<j\leq l$, since otherwise these instances would have significant overlaps.

\subsection{Proof of correctness}
We will prove this correctness inductively on the recursion depth, and the induction switches between $\mathsf{TreeMincuts}$ and $\mathsf{LeafMincuts}$. All correctness will hold with high probability. Below we state what correctness means for subroutines $\mathsf{TreeMincuts}$ and $\mathsf{LeafMincuts}$, respectively.
\begin{itemize}[leftmargin=*]
	\item $\mathsf{TreeMincuts}(G, T, k)$ succeeds if it returns values $\mu(s, t)$ such that $\lambda(s, t)\leq \mu(s, t)\leq \lambda_{T, k}(s, t)$, where $\lambda_{T, k}(s, t)$ is the value of the minimum $(s, t)$-cut that $k$-respects tree $T$.
	\item $\mathsf{LeafMincuts}(G, T, k)$ succeeds if it returns values $\mu(s, t)$ such that $\lambda(s, t)\leq \mu(s, t)\leq \eta_{T, k}(s, t)$ where $\eta_{T, k}(s, t)$ is the value of the minimum $k$-respecting $(s, t)$-cut which crosses edge $(s, p)$; here we assume $s$ is a leaf of $T$ connected via edge $(s, p)$.
\end{itemize}

Our inductive proof of correctness is a combination of the following two statements.
\begin{lemma}
	Suppose that $\mathsf{TreeMincuts}$ and $\mathsf{LeafMincuts}$ succeeds on smaller inputs. Then, we claim $\mathsf{TreeMincuts}(G, T, k)$ returns for every $t\in V(T)$ a value $\mu(s, t)$ such that $\lambda(s, t)\leq \mu(s, t)\leq \lambda_{T, k}(s, t)$.
\end{lemma}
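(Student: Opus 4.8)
The lower bound $\mu(s,t)\ge\lambda(s,t)$ is the easy half. Every assignment to $\mu(s,t)$ is the value of a concrete cut of $G$ that separates $s$ from $t$: the base case and the direct max-flow computations return exact $(s,\cdot)$-mincut values, and a cut of a contracted graph $G_i$, $H$, or $G_0$ (after the possible relabelling of $c$ as the new source) lifts, by expanding the contracted node back to $V\setminus W_i$ (resp.\ $V\setminus S$), to a cut of $G$ of the same value that still separates the original $s$ from $t$. So by the inductive hypothesis every value returned by a recursive call for a terminal $t$ is at least $\lambda_G(s,t)$, and the lower bound follows. It remains to prove $\mu(s,t)\le\lambda_{T,k}(s,t)$ for each $t$.

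Fix $t\in V(T)\setminus\{s\}$. If no $(s,t)$-cut $k$-respects $T$ then $\lambda_{T,k}(s,t)=\infty$ and there is nothing to prove; otherwise fix a minimum such cut $(A,V\setminus A)$ with $s\in A$, $t\notin A$, $\delta(A)=\lambda_{T,k}(s,t)$, and $|E(T)\cap C|\le k$, where $C$ is its set of crossing edges (and assume $|V(T)|$ exceeds the base-case constant, else the max-flow is exact). I will show that, with high probability, some phase of $\mathsf{TreeMincuts}(G,T,k)$ sets $\mu(s,t)\le\delta(A)$. The proof is a case analysis on which of the pieces $\{s\}\cup V(F)$, $V(T_0)$, $V(T_1),\dots,V(T_l)$ contains $t$, on the side of $(A,V\setminus A)$ on which the centroid $c$ lies, and on how the at most $k$ edges of $E(T)\cap C$ are distributed among these pieces and the path $\langle r_0,\dots,r_h\rangle$ joining $s$ to $c$. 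Two mechanisms together cover all cases. The first is \emph{localise and recurse with the same $k$}: if the edges of $E(T)\cap C$ are confined to a single piece $V(T_i)$ with $i\ge1$ (and symmetrically for $V(F)$ or $V(T_0)$), then I claim that some minimum $k$-respecting $(s,t)$-cut places all of $V\setminus W_i$ on the $s$-side; such a cut survives the contraction that forms $G_i$, still $k$-respects $T^{(3)}_i$, and so $\mathsf{TreeMincuts}(G_i,T^{(3)}_i,k)$ — an instance whose tree has size at most $2|V(T)|/3+1<|V(T)|$ — recovers it by hypothesis. The claim is obtained by uncrossing $(A,V\setminus A)$ against the isolating cut $W_i$: because $W_i$ is a minimum cut isolating $V(T_i)$ from the remaining terminal sets, sub/posimodularity of $\delta$ forces one of the sets produced by uncrossing to be a minimum $k$-respecting $(s,t)$-cut with $V\setminus W_i$ on the $s$-side — here the presence of the terminal pair $\{s,c\}$ in the isolating-cut computation, which guarantees $s,c\notin W_i$, is what keeps the tree-crossing count from rising.

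The second mechanism is \emph{prune and recurse with $k-1$}: if the edges of $E(T)\cap C$ meet at least two of the prunable pieces, then at most $k-1$ of them lie in or on the boundary of $t$'s own piece, and there is an event of probability at least $2^{-k}$ over the coins of a single pruning round under which every prunable piece other than $t$'s that carries a crossing edge is deleted while $t$'s piece survives; under this event the restriction of $(A,V\setminus A)$ is a $(k-1)$-respecting $(s,t)$-cut of the pruned tree $T^{(1)}$, so $\mathsf{TreeMincuts}(G,T^{(1)},k-1)$ recovers it by hypothesis. As the pruning round runs $\Theta(\log n)$ times, this good event happens in some round except with probability $n^{-\Omega(1)}$. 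The configurations left over — $t$ on the path $\langle r_0,\dots,r_h\rangle$, crossing edges incident to $r_1$, and cuts that separate $s$ from $c$ without isolating any $T_i$ — are handled by the same two mechanisms applied to the purpose-built objects: the subtree $T^{(2)}$ on $\{s\}\cup V(F)$ (recursed with $k-1$), the subtree $T^{(4)}$ got by re-attaching $s$ directly to $c$ (recursed with $k-1$), and the call $\mathsf{LeafMincuts}(G,T^{(4)},k)$, whose weaker guarantee — only $k$-respecting $(s,t)$-cuts that cross the edge $(s,c)$ of $T^{(4)}$ — is precisely what that last configuration requires. Conditioning on the $O(l+\log n)$ direct recursive calls all succeeding (a high-probability event by hypothesis and a union bound), and then union-bounding the pruning good events over the at most $n$ terminals $t$, gives the claimed guarantee with high probability.

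The step I expect to be the main obstacle is the uncrossing behind the first mechanism: replacing $(A,V\setminus A)$ by its intersection or union with an isolating cut must simultaneously not raise the cut value, keep $s$ and $t$ separated, \emph{and} not increase the number of edges of $T$ crossed. The last of these is the delicate one — unrestricted uncrossing can create fresh tree-edge crossings — and it is exactly to keep it in check that the decomposition is organised around the centroid $c$, the extra terminal pair $\{s,c\}$, and the source-to-centroid path $\langle r_0,\dots,r_h\rangle$ with its auxiliary pieces $F$ and $T_0$. A secondary, more routine chore is checking, for each "spread" configuration in turn, that the corresponding good pruning event has probability bounded below by a constant depending only on $k$, so that $\Theta(\log n)$ repetitions plus a union bound suffice.
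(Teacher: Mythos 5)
Your proposal matches the paper's proof in all essentials: the lower bound via lifting cuts from contracted instances, the case split by which piece of $T\setminus\{c\}$ contains $t$ and how the $\le k$ crossed tree edges distribute across pieces, the uncrossing/submodularity argument against the isolating cut when the crossed edges are confined to one piece (recursing at the same $k$), the random pruning argument when they are spread across several prunable pieces (recursing at $k-1$ with constant success probability per round, $O(\log n)$ rounds), and the residual cases routed to $T^{(2)}$, $T^{(4)}$ with $k-1$, and $\mathsf{LeafMincuts}(G,T^{(4)},k)$ for cuts separating $s$ from $c$. The one step you flag as a likely obstacle — showing the uncrossing cannot increase the number of crossed tree edges — is exactly the submodularity calculation the paper carries out, and the role you assign to the terminal pair $\{s,c\}$ in keeping $c$ out of $W_i$ is precisely how the paper makes it work.
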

\begin{proof}
	It is easy to argue that lower bounds always hold, namely $\mu(s, t)\geq \lambda(s, t)$. So for the rest we only focus on the upper bounds $\mu(s, t)\leq \lambda_{T, k}(s, t)$. Let $t\in V(T)\setminus \{s, c\}$ be an arbitrary vertex. Let $(A, V\setminus A)$ be the minimum $(s, t)$-cut which $k$-respects $T$, with $t\in A, s\notin A$.
	
	\begin{framed}
	\noindent First let us study vertices $t\in V(F)$. For this let us study two cases.
	\begin{itemize}[leftmargin=*]
		\item Suppose $(A, V\setminus A)$ only cuts tree edges incident on $F$. Then we claim that after calling $\mathsf{TreeMincuts}(H, T^{(3)}_0, k)$, $\mu(s, t)$ becomes at most $\lambda_{T, k}(s, t)$. It suffices to show that there exists a subset $B\subseteq S$ with $t\in B, s\notin B$ such that $\delta(B) = \delta(A)$ in graph $G$. By sub-modularity of cuts, we have:
		$$\delta(A) + \delta(S)\geq \delta(A\cap S) + \delta(A\cup S)$$
		However, on the one hand, since $(A, V\setminus A)$ does not cut any tree edges not incident on $F$, $A\cup S$ does not contain any vertices in $V(T)\setminus V(F)$. Hence, by minimality of $(S, V\setminus S)$ we know that $\delta(S)\leq \delta(A\cup S)$. On the other hand, since $V(F)\subseteq S$, we know that $(A\cup S, V\setminus (A\cup S))$ cuts the same number of edges on $T$ as $(A, V\setminus A)$. So by minimality of $(A, V\setminus A)$, we have $\delta(A)\leq \delta(A\cap S)$. Summing both ends, we have $\delta(A) + \delta(S) =\delta(A\cap S) + \delta(A\cup S)$, and so both equalities hold. In particular, $\delta(A) = \delta(A\cap S)$. Taking $B = A\cap S$ suffices.
		
		\item Suppose $(A, V\setminus A)$ cuts tree edges not incident on $F$, and so it cuts at most $k-1$ tree edges incident on $F$. Hence, the recursive call on $\mathsf{TreeMincuts}(G, T^{(2)}, k-1)$ would update $\mu(s, t)$ desirably.
	\end{itemize}
	\end{framed}
	
	\begin{framed}
	\noindent Now suppose $t \in V(T_0)$. There are several cases below.
	\begin{itemize}[leftmargin=*]
		\item Suppose $(A, V\setminus A)$ only cuts tree edges incident on $T_0$. Then we claim that after calling $\mathsf{TreeMincuts}(G_0, T_0^{(3)}, k)$, $\mu(s, t)$ becomes at most $\lambda_{T, k}(s, t)$. It suffices to show that there exists a subset $B\subseteq W_0$ with $t\in B, s\notin B$ such that $\delta(B) = \delta(A)$ in graph $G$. By sub-modularity of cuts, we have:
		$$\delta(A) + \delta(W_0)\geq \delta(A\cap W_0) + \delta(A\cup W_0)$$
		However, on the one hand, since $(A, V\setminus A)$ does not cut any tree edges not incident on $T_0$, $A\cup W_0$ does not contain any vertices in $V(T)\setminus V(T_0)$. Hence, by minimality of $(W_0, V\setminus W_0)$ we know that $\delta(W_0)\leq \delta(A\cup W_0)$. On the other hand, since $V(T_0)\subseteq W_0$, we know that $(A\cap W_0, V\setminus (A\cap W_0))$ cuts the same number of edges on $T$ as $(A, V\setminus A)$. So by minimality of $(A, V\setminus A)$, we have $\delta(A)\leq \delta(A\cap W_0)$. Summing both ends, we have $\delta(A) + \delta(W_0) =\delta(A\cap W_0) + \delta(A\cup W_0)$, and so both equalities hold. In particular, $\delta(A) = \delta(A\cap W_0)$. Taking $B = A\cap W_0$ suffices.
		
		\item Suppose $(A, V\setminus A)$ cuts some edges incident on forest $F$ or other trees $T_i, 1\leq i\leq l$, then with at least constant probability over the choice of $T^{(1)}$, $T_0$ stays in $T^{(1)}$ while $F$ or $T_i$ gets pruned. Hence, the recursive call on $\mathsf{TreeMincuts}(G, T^{(1)}, k-1)$ updates $\mu(s, t)$ correctly.
	\end{itemize}
	\end{framed}
	
	\begin{framed}
		\noindent Let us consider vertices $t\in V(T_i)$ for some $1\leq i\leq l$.
		\begin{itemize}[leftmargin=*]
			\item Suppose $(A, V\setminus A)$ only cuts tree edges incident on $T_i$. Then we claim that after calling $\mathsf{TreeMincuts}(G_i, T^{(3)}_i, k)$, $\mu(s, t)$ becomes at most $\lambda_{T, k}(s, t)$. It suffices to show that there exists a subset $B\subseteq W_i$ with $t\in W_i, s\notin W_i$ such that $\delta(B) = \delta(A)$ in graph $G$. By sub-modularity of cuts, we have:
			$$\delta(A) + \delta(W_i)\geq \delta(A\cap W_i) + \delta(A\cup W_i)$$
			However, on the one hand, since $(A, V\setminus A)$ does not cut any tree edges not incident on $T_i$, $A\cup W_i$ does not contain any vertices in $V(T)\setminus V(T_i)$. Hence, by minimality of $(S, V\setminus W_i)$ we know that $\delta(W_i)\leq \delta(A\cup W_i)$. On the other hand, since $V(T_i)\subseteq W_i$, we know that $(A\cup W_i, V\setminus (A\cup W_i))$ cuts the same number of edges on $T$ as $(A, V\setminus A)$. So by minimality of $(A, V\setminus A)$, we have $\delta(A)\leq \delta(A\cap W_i)$. Summing both ends, we have $\delta(A) + \delta(W_i) =\delta(A\cap W_i) + \delta(A\cup W_i)$, and so both equalities hold. In particular, $\delta(A) = \delta(A\cap W_i)$. Taking $B = A\cap W_i$ suffices.
			
			\item Suppose $(A, V\setminus A)$ cuts some edges incident on forest $F$ or other trees $T_j, j\neq i$, then with at least constant probability over the choice of $T^{(1)}$, $T_i$ stays in $T^{(1)}$ while $F$ or $T_j$ gets pruned. Hence, the recursive call on $\mathsf{TreeMincuts}(G, T^{(1)}, k-1)$ updates $\mu(s, t)$ correctly.
			
			\item Suppose $(A, V\setminus A)$ does not cut any edges incident on forest $F$ nor other trees $T_j, j\neq i$, but some edges in $T_0$. If $c\notin A$, then $(A, V\setminus A)$ does not separate $s, c$ in $T^{(4)}$, so it has at most $k-1$ cut edges in $T^{(4)}$. Hence this case is covered in recursive call $\mathsf{TreeMincuts}(G, T^{(4)}, k-1)$.
			
			If $c\in A$, so $(A, V\setminus A)$ cuts $(s, c)$ on $T^{(4)}$. Therefore, this case is covered in recursive call $\mathsf{LeafMincuts}(G, T^{(4)}, k)$.
		\end{itemize}
	\end{framed}
\end{proof}

\begin{lemma}
	Suppose that $\mathsf{TreeMincuts}$ and $\mathsf{LeafMincuts}$ succeeds on smaller inputs. Then, we claim $\mathsf{LeafMincuts}(G, T, k)$ returns for every $t\in V(T)$ a value $\mu(s, t)$ such that $\lambda(s, t)\leq \mu(s, t)\leq \eta_{T, k}(s, t)$.
\end{lemma}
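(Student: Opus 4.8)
The lower bound $\mu(s,t)\ge\lambda(s,t)$ is automatic: every value written to $\mu(s,t)$, be it from a direct max-flow call, from an isolating cut, or from a recursion on a contracted graph, is the capacity of a cut of $G$ that separates $s$ from $t$ (a cut of a contraction lifts to one of $G$). So the real content is the upper bound $\mu(s,t)\le\eta_{T,k}(s,t)$, which I would establish by induction on the recursion depth, with the base cases ($k=0$ vacuous, $|T|=O(1)$ by direct max-flow) immediate. Fix $t\in V(T)\setminus\{s\}$ and a minimum $k$-respecting $(s,t)$-cut $(A,V\setminus A)$ with $t\in A$, $s\notin A$ that crosses the tree edge $(s,p)$, so $\delta(A)=\eta_{T,k}(s,t)$; since $s$ is a leaf and $(s,p)$ is cut, $\{s\}$ is its own component of $T$ minus the $\le k$ cut tree edges, and every other component of $T$ lies wholly in $A$ or wholly in $V\setminus A$. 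I would then split on where $t$ sits in the centroid decomposition $T\setminus\{c\}=T_0\cup\cdots\cup T_l$, with $s\in V(T_0)$, together with the case $t=c$.

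Most cases reuse, verbatim in structure, the three templates of the previous lemma. (a) If $t\in V(T_0)\setminus\{s\}$ and every cut tree edge is incident to $T_0$, uncross $A$ against $W_0$: submodularity together with minimality of the isolating cut gives $B:=A\cap W_0$ with $\delta(B)\le\delta(A)$ that still separates $s$ from $t$, cuts exactly the same tree edges, and survives the contraction into $G_0$, so $\mathsf{LeafMincuts}(G_0,T^{(3)},k)$ fixes $\mu(s,t)$ by induction. (b) If the cut meets a subtree $T_i$ with $i\ge1$ (in addition to whatever else), then with constant probability the subtree-pruning phase prunes one offending $T_i$ while keeping $t$'s component, dropping the tree-edge count, so $\mathsf{LeafMincuts}(G,T^{(1)},k-1)$ fixes $\mu(s,t)$; the $O(\log n)$ repetitions boost this to high probability. (c) If $c\notin A$ then, in $T^{(2)}$ where $s$ is re-hung directly at $c$, no separating tree edge touches $s$, so the cut $(k-1)$-respects $T^{(2)}$ and $\mathsf{TreeMincuts}(G,T^{(2)},k-1)$ fixes $\mu(s,t)$. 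These, plus the observation that whenever $A\supseteq\{c\}\cup\bigcup_i V(T_i)$ the cut $A$ is itself an $(s,\{c\}\cup\bigcup_i V(T_i))$-cut and is caught by the direct max-flow, handle $t=c$ and every $t\in V(T_i)$ except one configuration.

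The exceptional configuration — $t\in V(T_i)$ for some $i\ge1$ with $c\in A$, the cut meeting exactly $T_i$, and $V(T_i)\not\subseteq A$ — is where $\mathsf{LeafMincuts}$ genuinely departs from $\mathsf{TreeMincuts}$: $A$ cannot be confined to a single $W_\bullet$ without losing $t$, pruning $T_i$ would delete $t$, and $T^{(2)}$ has already spent a tree edge at $s$. The plan is to cover it with the remaining machinery of $\mathsf{LeafMincuts}$: the direct min $(s,\{c\}\cup\bigcup_i V(T_i))$-cut, the \textbf{argmax} choice of the vertex $z\in\bigcup_i V(T_i)$ of largest current estimate, and the single follow-up call $\mathsf{LeafMincuts}(G_1,T^{(4)},k)$ on $z$'s subtree. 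Concretely I would argue that, because the subtree-pruning phase has already run, any surviving such $t$ is in the current view of $\mu$ no harder to separate from $s$ than $z$ is — the cut $A$, which contains $c$ and all of $T_{i'}$ for $i'\ne i$, also separates $s$ from $z$ and merely $1$-respects $T^{(4)}$ at $s$ — so $\mathsf{LeafMincuts}(G_1,T^{(4)},k)$ drives $\mu(s,z)$ down to at most $\delta(A)$, and then $\mu(s,t)\le\mu(s,z)\le\delta(A)=\eta_{T,k}(s,t)$. I expect the main obstacle to be precisely this last step: making the inequality $\mu(s,t)\le\mu(s,z)$ hold at the right moment, which is exactly why the $z$-selection and the recursion on $T_1$ must follow — rather than precede — the subtree-pruning phase. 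Everything else is the same tree-edge counting and submodular uncrossing already used for $\mathsf{TreeMincuts}$.
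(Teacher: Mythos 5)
Your case split for the routine parts matches the paper's — uncrossing against $W_0$ for $t\in V(T_0)$, pruning when the cut touches an extra $T_j$, $T^{(2)}$ when $c\notin A$, and the direct min-$(s,\{c\}\cup\bigcup_i V(T_i))$-cut when $A$ swallows everything — and you correctly isolate the configuration where $\mathsf{LeafMincuts}$ genuinely differs. But your handling of that configuration has a real gap.

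The problem is temporal. You argue $\mu(s,t)\le\mu(s,z)$ at the moment $z$ is selected, and separately that $\mathsf{LeafMincuts}(G_1,T^{(4)},k)$ drives $\mu(s,z)$ down to $\le\delta(A)$. These two inequalities refer to two different moments, and the second is later: once the $G_1$-recursion has lowered $\mu(s,z)$, there is no reason $\mu(s,t)\le\mu(s,z)$ should still hold, because $\mu(s,t)$ was not touched (for $i\ne1$, $t\notin V(T^{(4)})$). You flag this as "the main obstacle" and point to the ordering (selection after pruning) as the resolution, but the ordering alone is not the mechanism — you need to argue that $\mu(s,z)\le\delta(A)$ already holds \emph{before} the argmax is taken, which the $G_1$-recursion cannot supply. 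What does supply it is the observation that, when $i\ne1$, $A$ contains $c$ and all of $T_j$ for $j\ne i$, hence $z\in V(T_1)\subseteq A$, so $(A,V\setminus A)$ is an $(s,z)$-cut crossing $(s,p)$ and some edge of $T_i\ne T_1$; the subtree-pruning phase then lowers $\mu(s,z)\le\delta(A)$ with high probability \emph{before} the argmax. The paper packages exactly this as a proof by contradiction that $i=1$ (assuming $\mu(s,t)>\delta(A)$, else done); your version never uses $z\in A$ and so never gets the pre-argmax bound on $\mu(s,z)$.

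The other subcase is also not handled. When $i=1$ (so $t\in V(T_1)\subseteq V(T^{(4)})$), $z$ need not lie in $A$, so the argmax chain cannot apply; instead $\mu(s,t)$ must be updated directly by $\mathsf{LeafMincuts}(G_1,T^{(4)},k)$, and for that you must first uncross: repeatedly replace $A$ with $A\cup W_j$ for $j\ne1$ using submodularity and minimality of the isolating cuts $W_j$, obtaining $B\supseteq\{c\}\cup\bigcup_{j\ne1}W_j$ with $\delta(B)=\delta(A)$, $t\in B$, $s\notin B$, still $k$-respecting $T^{(4)}$ and surviving the contraction to $G_1$. Your proposal omits this uncrossing entirely; without it the inductive hypothesis for the $G_1$-call does not yield $\mu(s,t)\le\delta(A)$. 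A small additional omission: your case (c) handles $c\notin A$, but the paper also routes the subcase $c\in A$ with $A$ cutting a $T_0$-edge other than $(s,p)$ through $\mathsf{TreeMincuts}(G,T^{(2)},k-1)$ (it is $(k-1)$-respecting in $T^{(2)}$ because $(s,p)$ disappears once $s$ is re-hung at $c$); your enumeration drops that branch.
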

\begin{proof}
		It is easy to argue that lower bounds always hold, namely $\mu(s, t)\geq \lambda(s, t)$. So for the rest we only focus on the upper bounds $\mu(s, t)\leq \lambda_{T, k}(s, t)$. Let $t\in V(T)\setminus \{s, c\}$ be an arbitrary vertex. Let $(A, V\setminus A)$ be the minimum $(s, t)$-cut which $k$-respects $T$, with $p, t\in A, s\notin A$.
		
	\begin{framed}
		\noindent Suppose $t \in V(T_0)$. There are several cases below.
		\begin{itemize}[leftmargin=*]
			\item Suppose $(A, V\setminus A)$ only cuts tree edges incident on $T_0$. Then we claim that after calling $\mathsf{LeafMincuts}(G_0, T^{(3)}, k)$, $\mu(s, t)$ becomes at most $\eta_{T, k}(s, t)$. It suffices to show that there exists a subset $B\subseteq W_0$ with $t\in B, s\notin B$ such that $\delta(B) = \delta(A)$ in graph $G$. By sub-modularity of cuts, we have:
			$$\delta(A) + \delta(W_0)\geq \delta(A\cap W_0) + \delta(A\cup W_0)$$
			However, on the one hand, since $(A, V\setminus A)$ does not cut any tree edges not incident on $T_0$, $A\cup W_0$ does not contain any vertices in $V(T)\setminus V(T_0)$. Hence, by minimality of $(W_0, V\setminus W_0)$ we know that $\delta(W_0)\leq \delta(A\cup W_0)$. On the other hand, since $V(T_0)\subseteq W_0$, we know that $(A\cap W_0, V\setminus (A\cap W_0))$ cuts the same number of edges on $T$ as $(A, V\setminus A)$. So by minimality of $(A, V\setminus A)$, we have $\delta(A)\leq \delta(A\cap W_0)$. Summing both ends, we have $\delta(A) + \delta(W_0) =\delta(A\cap W_0) + \delta(A\cup W_0)$, and so both equalities hold. In particular, $\delta(A) = \delta(A\cap W_0)$. Taking $B = A\cap W_0$ suffices.
			
			\item Suppose $(A, V\setminus A)$ cuts some edges incident some trees $T_i, 1\leq i\leq l$, then with at least constant probability over the choice of $T^{(1)}$, $T_0$ stays in $T^{(1)}$ while $F$ or $T_i$ gets pruned. Hence, the recursive call on $\mathsf{LeafMincuts}(G, T^{(1)}, k-1)$ updates $\mu(s, t)$ correctly.
		\end{itemize}
	\end{framed}
	
		Let us consider vertices $t\in V(T_i)$ for some $1\leq i\leq l$.
		\begin{itemize}[leftmargin=*]
			\item Suppose $(A, V\setminus A)$ cuts some edges incident on other trees $T_j, j\neq i$, then with at least constant probability over the choice of $T^{(1)}$, $T_i$ stays in $T^{(1)}$ while $F$ or $T_j$ gets pruned. Hence, the recursive call on $\mathsf{LeafMincuts}(G, T^{(1)}, k-1)$ updates $\mu(s, t)$ correctly.
			
			\item Suppose $(A, V\setminus A)$ does not cut any edges incident on tree $T_j, j\neq i, 1\leq j\leq l$, plus the condition that either $c\notin A$, or $(A, V\setminus A)$ cuts some edges incident on $T_0$ other than $(s, p)$.
			
			In this case, if $c\notin A$, then $(A, V\setminus A)$ does not separate $s, c$ in $T^{(2)}$. As $(A, V\setminus A)$ cuts $(s, p)$ in $T$, so it has at most $k-1$ cut edges in $T^{(2)}$. Hence this case is covered in recursive call $\mathsf{TreeMincuts}(G, T^{(2)}, k-1)$.
			
			$(A, V\setminus A)$ cuts some edges incident on $T_0$ other than $(s, p)$, then as $(A, V\setminus A)$ cuts $(s, p)$ in $T$, it has at most $k-2$ cut edges in $T^{(2)}$. Hence this case is covered in recursive call $\mathsf{TreeMincuts}(G, T^{(2)}, k-1)$.
			
			\item The most important case is that $(A, V\setminus A)$ does not cut any edges incident on tree $T_j, j\neq i, 0\leq j\leq l$, plus the condition that $c\in A$.
			
			If $(A, V\setminus A)$ also avoid any edges incident on $T_i$, then $A$ contains the entire $\{c\}\cup \bigcup_{j=1}^l V(T_l)$. Then this case is covered by the max-flow computation that computes the minimum $(s, \{c\}\cup\bigcup_{i=1}^lV(T_i))$-cut. For the rest, let us assume $(A, V\setminus A)$ cuts at least one tree edge incident on $T_i$.
			
			\begin{claim}
				In this case, it must be $i = 1$.
			\end{claim} 
			\begin{proof}[Proof of claim]
				Suppose otherwise that $2\leq i\leq l$. Since $(A, V\setminus A)$ avoid any tree edges incident on $T_j, j\neq i, 0\leq j\leq l$, plus that $c\in A$, we know that $\bigcup_{j=1, j\neq i}^l V(T_j)\subseteq A$. So, in particular we have $z\in A$; recall that $z\in V(T_1)$ is a maximizer of $\mu(s, \cdot)$ among vertices in $\bigcup_{j=1}^lV(T_j)$ after the subtree pruning phase.
			
				Consider the moment when the algorithm was just about to find the maximizer. We can assume by then $\mu(s, t) > \delta(A)$. Since $z\in A$, $(A, V\setminus A)$ is an $(s, z)$-cut that $k$-respects $T$, cutting tree edge $(s, p)$, and most importantly it cuts an edge incident on tree $T_i$ different from $T_1$. Then, during subtree pruning, $\mu(s, z)$ has already become $\leq \delta(A)$ after $O(\log n)$ trials of $\mathsf{LeafMincuts}(G, T^{(1)}, k-1)$. Hence, by the time when looking for maximizers, we know $\mu(s, t) > \delta(A) \geq \mu(s, z)$, which contradicts the maximality of $z$.
			\end{proof}
			
			Next we prove that after calling $\mathsf{TreeMincuts}(G_1, T^{(4)}, k)$, $\mu(s, t)$ becomes at most $\delta(A)$. Define $W = \{c\}\cup\bigcup_{i=0, i\neq 1}^l W_i$. It suffices to show that there exists a subset $B\supseteq W$ with $t\in B, s\notin B$ such that $\delta(B) = \delta(A)$ in graph $G$. 
			
			Consider any $W_j, 0\leq j\leq l, j\neq 1$ such that $W_j\setminus A\neq \emptyset$. By sub-modularity of cuts, for each $0\leq i\leq l, j\neq 1$ we have:
			$$\delta(A) + \delta(W_j)\geq \delta(A\cap W_j) + \delta(A\cup W_j)$$
			
			On the one hand, as $(A, V\setminus A)$ avoids all tree edges incident on $\bigcup_{i=0,i\neq 1}^l V(T_i)$ except for $(s, p)$, $A\cap W_j$ still contain any vertices in $V(T_i)$ ($V(T_0)\setminus \{s\}$ when $i=0$). Hence, by minimality of $(W_i, V\setminus W_i)$ we know that $\delta(W_i)\leq \delta(A\cap W_i)$. 
			
			On the other hand, since $W_i$ is an isolating set, we know that $(A\cup W_i, V\setminus (A\cup W_i))$ cuts the same number of edges on $T_i$ as $(A, V\setminus A)$. So by minimality of $(A, V\setminus A)$, we have $\delta(A)\leq \delta(A\cup W_i)$. Summing both ends, we have $\delta(A) + \delta(W_i) =\delta(A\cap W_i) + \delta(A\cup W_i)$, and so both equalities hold. In particular, $\delta(A) = \delta(A\cup W_i)$. So we can replace $A$ with $A\cup W_i$. Iterating over all $i$ finishes the proof.
		\end{itemize}
\end{proof}

\subsection{Running time analysis}
We first bound the total number of vertices of all instances for recursive calls $\mathsf{TreeMincuts}(G, T, k)$ and $\mathsf{LeafMincuts}(G, T, k)$. Denote this total amount by a recursive function $f_1(n, t, k)$ and $f_2(n, t, k)$ respectively, where $t = |V(T)|$ and $n = |V|$.

In algorithm $\mathsf{TreeMincuts}$, for each $0\leq i\leq l$, let $n_i$ be the number of vertices in $G_i$, and $t_i = |V(T_i^{(3)})|$. So $\sum_{i=0}^l (n_i-1) \leq n-1$, $\sum_{i=0}^l (t_i-1)\leq t-1$, $t_i\leq 2t/3$. According to the algorithm description, we have the recursive relation:
$$f_1(n, t, k)\leq O(\log n)\cdot f_1(n, t, k-1) +f_2(n, t, k) + \sum_{i=0}^l f_1(n_i, t_i, k)$$

In algorithm $\mathsf{LeafMincuts}$, let $n_0$ be the number of vertices in $G_0$, and $n_1$ be the number of vertices in $G_1$, and $t_0 = |V(T^{(3)})|, t_1 = |T^{(4)}|$. So $(n_0-1) + (n_1-1)\leq n-1$, and $(t_0-1) + (t_1-1)\leq t$, $t_0, t_1\leq 2t/3$. According to the algorithm description, we have the recursive relation:
$$f_2(n, t, k)\leq O(\log n)\cdot f_2(n, t, k-1) + f_1(n, t, k-1) + f_2(n_0, t_0, k) + f_2(n_1, t_1, k)$$

Using induction we can prove that 
$$\begin{aligned}
f_1(n, t, k)&\leq (n-1)\cdot\log^{3k}n\cdot \log t\\
f_2(n, t, k)&\leq (n-1)\cdot\log^{3k-2}n\cdot \log t
\end{aligned}$$

Next we try to upper bound the total number of edges \textbf{not incident on any contracted nodes} of all instances for recursive calls $\mathsf{TreeMincuts}(G, T, k)$ and $\mathsf{LeafMincuts}(G, T, k)$. Denote this total amount by a recursive function $g_1(n, t, k)$ and $g_2(n, t, k)$ respectively, where $t = |V(T)|$ and $m = |E|$. Since each recursive instance contains at most $O(\log t)$ contracted vertices, the total number of edges can be bounded by $g_1(m, t, k) + f_1(n, t, k)\cdot O(\log t)$ and $g_2(m, t, k) + f_2(n, t, k)\cdot O(\log t)$, respectively.

In algorithm $\mathsf{TreeMincuts}$, for each $0\leq i\leq l$, let $m_i$ be the number of edges in $G_i$ not incident on contracted vertices. So $\sum_{i=0}^l m_i\leq m$. According to the algorithm description, we have the recursive relation:
$$g_1(m, t, k)\leq O(\log n)\cdot g_1(m, t, k-1) +g_2(m, t, k) + \sum_{i=0}^l g_1(m_i, t_i, k)$$

In algorithm $\mathsf{LeafMincuts}$, let $m_0 m_1$ be the number of vertices in $G_0, G_1$ not incident on contracted vertices. So $m_0 + m_1\leq m$. According to the algorithm description, we have the recursive relation:
$$g_2(m, t, k)\leq O(\log n)\cdot g_2(m, t, k-1) + g_1(m, t, k-1) + g_2(m_0, t_0, k) + g_2(m_1, t_1, k)$$

Using induction we can prove that 
$$\begin{aligned}
g_1(m, t, k)&\leq m\cdot\log^{3k}n\cdot \log t\\
g_2(m, t, k)&\leq m\cdot\log^{3k-2}n\cdot \log t
\end{aligned}$$
When $k$ is a constant, this proves Lemma~\ref{tree}.

\section*{Acknowledgement}
The author would like to thank helpful discussions with Shiri Chechik.

\vspace{5mm}
\bibliographystyle{alpha}
\bibliography{ref}

\end{document}